\algrenewcommand\algorithmicrequire{\textbf{Input:}}
\algrenewcommand\algorithmicensure{\textbf{Output:}}
\newcommand{\R}{\mathbb{R}}
\renewcommand{\O}{\mathrm{O}}
\newcommand{\V}{\mathcal{V}}
\newcommand{\K}{\mathcal{K}}
\renewcommand{\L}{\mathcal{L}}
\newcommand{\diag}[1]{\mathrm{diag}\left(#1\right)}
\newcommand{\tr}[1]{\mathrm{tr}\left(#1\right)}
\newcommand{\lrp}[1]{\left(#1\right)}
\newcommand{\lrs}[1]{\left\{#1\right\}}
\newcommand{\Sym}{\mathrm{Sym}}
\newcommand{\Fl}{\mathrm{Fl}}
\newcommand{\C}{\mathrm{C}}
\newcommand{\T}{^\top}
\DeclareMathOperator*{\argmin}{argmin}
\newcommand{\norm}[1]{\left\lVert#1\right\rVert}
\begin{document}
\title{Eigengap Sparsity for Covariance Parsimony}  
%
\author{Tom Szwagier$^{(\text{\Letter})}$\orcidlink{0000-0002-2903-551X} \and
Guillaume Olikier\orcidlink{0000-0002-2767-0480} \and
Xavier Pennec\orcidlink{0000-0002-6617-7664}}
\authorrunning{T. Szwagier et al.}
%
\institute{Université Côte d'Azur, and Inria, Epione Project Team, Sophia-Antipolis, France
\email{\{tom.szwagier,guillaume.olikier,xavier.pennec\}@inria.fr}}
\maketitle              
\begin{abstract}
Covariance estimation is a central problem in statistics. An important issue is that there are rarely enough samples $n$ to accurately estimate the $p (p+1) / 2$ coefficients in dimension $p$. Parsimonious covariance models are therefore preferred, but the discrete nature of model selection makes inference computationally challenging. In this paper, we propose a relaxation of covariance parsimony termed ``eigengap sparsity'' and motivated by the good accuracy--parsimony tradeoffs of eigenvalue-equalization in covariance matrices. This penalty can be included in a penalized-likelihood framework that we propose to solve with a projected gradient descent on a monotone cone. The algorithm turns out to resemble an isotonic regression of mutually-attracted sample eigenvalues, drawing an interesting link between covariance parsimony and shrinkage.
\keywords{Covariance estimation \and Parsimony \and Eigengaps \and Flag manifolds \and Monotone cone \and Isotonic regression.}
\end{abstract}
\section{Introduction}
The \textit{principle of parsimony}, also known as Occam's razor (``The simplest explanation is usually the best one.''), is central in statistics. It implies that for two competing statistical models with similar likelihoods, one should choose the model with fewer parameters. It can be quantitatively achieved via model selection criteria such as the Bayesian Information Criterion (BIC)~\cite{schwarz_estimating_1978}, 
or the Akaike Information Criterion (AIC)~\cite{akaike_new_1974}.
Since the seminal LASSO paper~\cite{tibshirani_regression_1996}, it has been common to investigate parsimonious model estimation as a regularized optimization problem with a \textit{sparsity-inducing penalty}~\cite{bach_optimization_2012}.
Sparsity involves $\ell^0$-norms and leads to hard combinatorial problems, which is why it is classically \textit{relaxed} with $\ell^1$-norms, enjoying convergence and sparsity guarantees~\cite{bach_optimization_2012}.

The principle of parsimony finds a particular interest in the fundamental problem of \textit{covariance estimation}~\cite{pourahmadi_covariance_2011}. Indeed, in dimension $p$, covariance matrices have $p (p+1) / 2$ independent parameters, which makes them often \textit{overparameterized} with respect to the number $n$ of available samples.
The \textit{sample covariance matrix}---which is the maximum likelihood estimate of the population covariance matrix---then suffers from many \textit{sampling errors}. 
A typical high-dimensional phenomenon is the Marčenko--Pastur law~\cite{marcenko_distribution_1967}, which describes an important dispersion of the sample eigenvalues around their true value. This \textit{eigenvalue-spreading} phenomenon~\cite{johnstone_pca_2018} is illustrated in the seminal paper of Ledoit and Wolf~\cite[Fig.~4]{ledoit_well-conditioned_2004} and motivates the idea of \textit{shrinking} eigenvalues towards a mean value. To that extent, many regularized covariance estimation methods have emerged, notably involving shrinkage~\cite{stein_estimation_1975,ledoit_well-conditioned_2004,donoho_optimal_2018,ledoit_quadratic_2022}, sparsity~\cite{huang_covariance_2006,friedman_sparse_2008}, Bayesian methods~\cite{yang_estimation_1994,minka_automatic_2000} and random matrix theory~\cite{karoui_spectrum_2008}.

In this paper, we propose to tackle the problem of  covariance estimation with a \textit{penalized-likelihood} optimization on \textit{positive-definite} matrices: 
\begin{equation}\label{eq:problem}
    \hat \Sigma \in \argmin_{\Sigma \in \Sym_+(p)} \, \lrp{-2 \ln\L(\Sigma) + \alpha \operatorname{dim}(\Sigma)},
\end{equation}
where $\L\colon\Sym_+(p)\to\R$ is a likelihood function (e.g. Gaussian~\cite{tipping_probabilistic_1999,szwagier_curse_2024}, elliptical~\cite{chen_robust_2011} or Gaussian mixture~\cite{tipping_mixtures_1999,bouveyron_high-dimensional_2007,szwagier_parsimonious_2025}), $\alpha \in \R_{\geq 0}$ is a constant (like $~{\alpha_n = \ln n}$ for the BIC, $\alpha = 2$ for the AIC or any tuning hyperparameter) and $\operatorname{dim}\colon\Sym_+(p)\to\mathbb{Z}_{>0}$ is a penalty on the number of covariance parameters, i.e. the dimension of the submanifold of $\Sym_+(p)$ to which $\Sigma$ belongs, as properly defined later.
There are many ways to partition the space of covariance matrices into submanifolds (e.g. via coefficient-wise sparsity~\cite{friedman_sparse_2008} or rank constraints~\cite{bach_optimization_2012}). The one that we defend in this paper is the stratification of covariance matrices by the \textit{multiplicities of the eigenvalues}~\cite{groisser_geometric_2017}.
This original notion of parsimony is strongly motivated by the recent principal subspace analysis methodology~\cite{szwagier_curse_2024} (equalizing close eigenvalues yields a \textit{quadratic} decrease in the number of parameters without much decreasing the likelihood) which is a natural extension of the seminal probabilistic principal component analysis (PPCA) of Tipping and Bishop~\cite{tipping_probabilistic_1999}.

This paper is organized as follows. 
In section~\ref{sec:geometry}, we study the aforementioned stratification of the optimization space $\Sym_+(p)$, which enables to formally define the penalty $\operatorname{dim}$ in terms of eigenvalue multiplicities. We deduce that $\operatorname{dim}$ is constant on fixed-multiplicities submanifolds of $\Sym_+(p)$, which drastically simplifies the optimization problem. The bad news, however, is that solving the latter would require solving $2^{p-1}$ (unpenalized) optimization problems, which is computationally prohibitive, even in moderate dimensions. To that extent, we propose in section~\ref{sec:method} a relaxation of the penalty. 
This $\ell^1$-relaxation is based on the intriguing result that the number of covariance parameters can be \textit{exactly} rewritten as an $\ell^0$-norm of eigenvalue gaps. We will subsequently refer to our parsimony-inducing penalty as \textit{eigengap sparsity}.
We study the relaxed problem in the Gaussian setting in section~\ref{sec:Gaussian} and propose a projected gradient descent algorithm on the monotone cone of eigenvalues to address it. The algorithm draws an interesting link between covariance parsimony and shrinkage methods~\cite{ledoit_quadratic_2022}, and it shows promising results in synthetic experiments in section~\ref{sec:experiments}.

\section{Stratification of $\Sym_+(p)$ by Eigenvalue Multiplicities}\label{sec:geometry}
Let $\Sigma \in \Sym_+(p)$ eigendecompose as $\Sigma = \sum_{j=1}^p \lambda_j v_j {v_j}\T$ with $~{(\lambda_1, \dots, \lambda_p)\in\R^p}$ the {eigenvalues} (ordered-decreasing and positive) and $(v_1, \dots, v_p)\in\O(p)$ some associated {eigenvectors} ($\O(p)$ denotes the orthogonal group). If all the eigenvalues are \textit{simple}, then we need $p$ parameters to describe the eigenvalues and $\operatorname{dim}(\O(p)) = p (p-1) / 2$ parameters to describe the eigenvectors, resulting in a total of $\operatorname{dim}(\Sym_+(p)) = p (p+1) / 2$ covariance parameters to estimate.
Otherwise, some eigenvalues are \textit{multiple}, therefore we need fewer parameters not only for these eigenvalues, but also for the associated eigenspaces---since we quotient the rotational invariance of the eigenvectors within the eigenspace they span. 

More precisely, let $\C(p) \coloneqq \{\gamma\in\mathbb{Z}_{>0}^d\colon \sum_{k=1}^d \gamma_k = p, \, d\in[\![1, p]\!]\}$ be the set of \textit{compositions} of $p$~\cite{bergeron_standard_1995}.
Let $\gamma \coloneqq (\gamma_1, \dots, \gamma_d) \in \C(p)$, $q_k \coloneqq \sum_{l=1}^{k} \gamma_l$, and
\begin{equation}
    \K_+(\gamma) \coloneqq \{\lambda\in\R^p\colon \lambda_1=\dots=\lambda_{q_1} > \lambda_{q_1+1}=\dots=\lambda_{q_2} > \dots > 0\}
\end{equation}
be the set of \textit{ordered-decreasing} \textit{piecewise-constant} $p$-tuples of positive reals, whose pieces have respective sizes $\gamma_1, \gamma_2, \dots, \gamma_d$.
By eigenvalue decomposition,
\begin{equation}
    \Sym_+(\gamma) \coloneqq \lrs{V \diag{\lambda} V\T\colon \lambda \in \K_+(\gamma), V\in\O(p)}
\end{equation}
is the set of symmetric positive-definite matrices with eigenvalue multiplicities $\gamma_1, \dots, \gamma_d$. It is a \textit{submanifold} of $\Sym_+(p)$ of \textit{dimension} $d + (p^2 - \sum_{k=1}^d \gamma_k^2) / 2$. 

\begin{remark}
The theory behind this result is deep and has been worked out in~\cite{groisser_geometric_2017}. Let us still give a concise justification.
Let $\Fl(\gamma)$ be the \textit{flag manifold} of \textit{type} $\gamma$, i.e. the space of mutually-orthogonal linear subspaces $\V_1, \dots, \V_d$ of respective dimensions $\gamma_1, \dots, \gamma_d$~\cite{ye_optimization_2022,szwagier_rethinking_2023}. Let $\Pi_{\V_k}$ denote the orthogonal projection matrix onto $\V_k$. Then the eigendecomposition map $\lambda, \V \mapsto \sum_{k=1}^d \lambda_{q_k} \Pi_{\V_k}$ induces a diffeomorphism between $\K_+(\gamma) \times \Fl(\gamma)$ and $\Sym_+(\gamma)$. The submanifold and dimension properties then follow from those of $\K_+(\gamma)$ and $\Fl(\gamma)$.
\end{remark}

\noindent The previous result tells us that $\Sym_+(p)$ can be \textit{stratified} into submanifolds $\Sym_+(\gamma)$ of dimension $d + (p^2 - \sum_{k=1}^d \gamma_k^2) / 2$, as $\Sym_+(p) = \bigsqcup_{\gamma\in\C(p)} \Sym_+(\gamma)$. Therefore, for any $\Sigma\in\Sym_+(p)$, one can define $\operatorname{dim}(\Sigma)$ as the dimension of the \textit{unique} stratum $\Sym_+(\gamma)$ it belongs to. 
This implies that the \textit{penalized} covariance estimation problem~\eqref{eq:problem} can be replaced with a set of \textit{unpenalized} optimization problems on smooth matrix manifolds~\cite{ye_optimization_2022}.

In the Gaussian case, problem~\eqref{eq:problem} even has an \textit{explicit} solution, referred to as \textit{principal subspace analysis}~\cite{szwagier_curse_2024}.\footnote{
Principal subspace analysis was originally introduced to address a \textit{curse of isotropy} in PCA, in which close sample eigenvalues yield unstable principal components.
}
On each stratum $\Sym_+(\gamma)$, the maximum likelihood estimate $\hat\Sigma(\gamma)$ consists in the eigenvalue decomposition of the sample covariance matrix followed by a $\gamma$-block-averaging of the eigenvalues. Therefore, problem~\eqref{eq:problem} is equivalent to 
\begin{equation}
    \argmin_{\gamma\in\C(p)} \, \lrp{-2 \ln\L\bigl(\hat\Sigma(\gamma)\bigr) + \alpha \, \lrp{d + \frac{p^2 - \sum_{k=1}^d \gamma_k^2}2}},
\end{equation}
which can be solved \textit{exactly} as $\C(p)$ is finite.
Unfortunately, $\#\C(p) = 2^{p-1}$, so that solving exactly the problem is computationally prohibitive, even in moderate dimensions.
And this is not to mention non-Gaussian distributions, which do not even have an explicit maximum likelihood estimate of the covariance.

\section{Relaxation via Eigengap Sparsity}\label{sec:method}
We want to relax problem~\eqref{eq:problem} in order to make it computationally tractable. The first step consists in rewriting the parsimony-inducing penalty $\operatorname{dim}$ as $\ell^0$-norms of spectral gaps.
\begin{theorem}\label{thm:escp_1}
    Let $\gamma \coloneqq (\gamma_1, \dots, \gamma_d) \in \C(p)$ and $\Sigma \in \Sym_+(\gamma)$ with eigenvalues $~{\lambda \coloneqq (\lambda_1, \dots, \lambda_p) \in \K_+(\gamma)}$. Let $\delta\colon\R^2\rightarrow \R_{\geq 0}$ s.t. $\delta(a, b) = 0$ iff $a=b$. One has
    \begin{equation}\label{eq:thm_l0}
        \operatorname{dim}(\Sym_+(\gamma)) = 1 + \norm{\delta(\lambda_s, \lambda_{s+1})_{1 \leq s < p}}_0 + \norm{\delta(\lambda_s, \lambda_t)_{1 \leq s < t \leq p}}_0.
    \end{equation}
\end{theorem}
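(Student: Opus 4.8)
The plan is to reduce everything to the dimension formula $\operatorname{dim}(\Sym_+(\gamma)) = d + (p^2 - \sum_{k=1}^d \gamma_k^2)/2$ established just above, and to identify each $\ell^0$-norm as a purely combinatorial count of index pairs on which the eigenvalues disagree. The crucial observation is that the precise choice of $\delta$ is irrelevant: since $\delta(a,b) = 0$ if and only if $a = b$, the entry $\delta(\lambda_s, \lambda_t)$ is nonzero exactly when $\lambda_s \neq \lambda_t$, so each $\ell^0$-norm simply tallies the pairs of indices whose eigenvalues differ. Because $\lambda \in \K_+(\gamma)$, two eigenvalues coincide precisely when their indices fall in the same constant block, and since successive blocks are \emph{strictly} decreasing, $\lambda_s \neq \lambda_t$ holds if and only if $s$ and $t$ lie in different blocks.

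First I would evaluate the consecutive-gaps term. The sequence $\lambda_1 = \dots = \lambda_{q_1} > \lambda_{q_1+1} = \dots$ changes value exactly at the $d-1$ interfaces $s = q_1, \dots, q_{d-1}$ between successive blocks, and nowhere else; hence $\norm{\delta(\lambda_s,\lambda_{s+1})_{1 \le s < p}}_0 = d-1$.

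Next I would evaluate the all-pairs term by complementary counting. Among the $\binom{p}{2}$ pairs $(s,t)$ with $s < t$, the \emph{equal} ones are exactly those lying within a single block, of which block $k$ contributes $\binom{\gamma_k}{2}$. The number of unequal pairs is therefore $\binom{p}{2} - \sum_{k=1}^d \binom{\gamma_k}{2}$, and the one algebraic step worth spelling out is the simplification $\binom{p}{2} - \sum_{k=1}^d \binom{\gamma_k}{2} = \frac{p^2 - p}{2} - \frac{\sum_{k=1}^d \gamma_k^2 - p}{2} = \frac{p^2 - \sum_{k=1}^d \gamma_k^2}{2}$, using $\sum_{k=1}^d \gamma_k = p$. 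This yields $\norm{\delta(\lambda_s,\lambda_t)_{1 \le s < t \le p}}_0 = (p^2 - \sum_{k=1}^d \gamma_k^2)/2$.

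Finally, summing $1 + (d-1) + (p^2 - \sum_{k=1}^d \gamma_k^2)/2 = d + (p^2 - \sum_{k=1}^d \gamma_k^2)/2$ recovers the dimension formula and closes the argument. I expect no genuine obstacle beyond bookkeeping; the only point deserving care is to confirm that distinct blocks carry genuinely distinct values (guaranteed by the strict inequalities in the definition of $\K_+(\gamma)$), so that the two $\ell^0$-norms really count the block structure of $\gamma$ rather than any accidental numerical coincidence among eigenvalues.
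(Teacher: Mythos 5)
Your proof is correct and takes essentially the same approach as the paper's: both reduce the claim to the observation that $\delta(\lambda_s,\lambda_t)\neq 0$ exactly when $s$ and $t$ lie in different blocks of $\gamma$, yielding $d-1$ for the consecutive-gap term and $\bigl(p^2-\sum_{k=1}^d\gamma_k^2\bigr)/2$ for the all-pairs term, which sum with the constant $1$ to the stated dimension. The only (cosmetic) difference is the algebra for the all-pairs count: you use complementary counting via $\binom{p}{2}-\sum_{k=1}^d\binom{\gamma_k}{2}$, whereas the paper counts the cross-block pairs directly as $\sum_{1\leq k<l\leq d}\gamma_k\gamma_l$ and simplifies by a double-summation inversion---your route is, if anything, slightly shorter.
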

\begin{proof}
    By definition of $\lambda\in\K_+(\gamma)$, the eigenvalues can be divided into $d$ blocks of equal eigenvalues, of respective sizes $\gamma_1, \dots, \gamma_d$. 
    Let $q_k \coloneqq \sum_{l=1}^{k} \gamma_l$.
    Then one has $\delta(\lambda_s, \lambda_t) = 0 \iff \exists k\in [\![1, d]\!]\colon (s, t) \in  {]\!]q_{k-1}, q_k]\!]}^2$.
    Consequently, one has
    \begin{equation*}
        \norm{\delta(\lambda_s, \lambda_{s+1})_{1 \leq s < p}}_0 \coloneqq \#\{s\colon \delta(\lambda_s, \lambda_{s+1}) \neq 0\} = \#\{k\colon \lambda_{q_k} \neq \lambda_{q_k + 1}\} = d - 1.
    \end{equation*}
    Similarly, one has 
    \begin{equation*}
        \norm{\delta(\lambda_s, \lambda_t)_{1 \leq s < t \leq p}}_0 = \#\{ (s, t) \colon s<t, \, \lambda_s \neq \lambda_{t}\} = \sum_{k=1}^{d} \gamma_k \sum_{l=k+1}^d \gamma_{l}.
    \end{equation*}
    The last quantity can be rewritten by double summation inversion as
    \begin{equation*}
        \sum_{1\leq k < l \leq d} \gamma_k \gamma_l = \sum_{k=1}^{d} \gamma_k \lrp{p - \gamma_k - \sum_{l=1}^{k-1} \gamma_{l}} = p^2 - \sum_{k=1}^{d} \gamma_k^2 - \sum_{1\leq l < k \leq d} \gamma_k \gamma_l.
    \end{equation*}
    Hence, $\norm{\delta(\lambda_s,  \lambda_t)_{1\leq s < t \leq p}}_0 = (p^2 - \sum_{k=1}^{d} \gamma_k^2) / 2$, which concludes the proof. \qed
\end{proof}
Although relatively simple, Theorem~\ref{thm:escp_1} is quite remarkable.
While one may have thought of penalizing the successive gaps between eigenvalues (with total-variation-like penalties~\cite{tibshirani_sparsity_2005,harchaoui_multiple_2010}), our theorem conveys a greater message: to account for the parsimony of the \textit{eigenspaces} too, one should also penalize the gaps between non-adjacent eigenvalues. In other words, covariance parsimony tends to make \textit{all} the eigenvalues attracted to each other, which is reminiscent of the recent findings of Ledoit and Wolf in shrinkage estimation~\cite{ledoit_quadratic_2022}.
One can now naturally relax problem~\eqref{eq:problem} by replacing the $\ell^0$-norm with the $\ell^1$-norm.

Let $\K_+(p) = \{(\lambda_1, \dots, \lambda_p)\in\R^p\colon \lambda_1 \geq \dots \geq \lambda_p > 0\}$ denote the \textit{positive monotone cone}. Then the $\ell^1$-relaxation of problem~\eqref{eq:problem} is
    \begin{equation}\label{eq:thm_linear}
        \argmin_{\substack{V\in \O(p)\\ \lambda \in \K_+(p)}} \biggl(-2 \ln\L\lrp{V \diag{\lambda} V\T} + \alpha \, \sum_{s=1}^{p-1} \biggl(\delta(\lambda_s, \lambda_{s+1}) + \sum_{t=s+1}^p \delta(\lambda_s, \lambda_{t})\biggr)\biggr).
    \end{equation}
The relaxed problem~\eqref{eq:thm_linear} can be solved with classical constrained optimization algorithms, for instance alternating between the update of the eigenvectors in $V\in\O(p)$ and the eigenvalues in $\lambda\in\K_+(p)$. 
In the remainder of the paper, we focus on the fundamental case where $\L$ is a Gaussian likelihood.

\section{Projected Gradient Descent for Gaussian Densities}\label{sec:Gaussian}
Under the Gaussian assumption, the maximum likelihood estimate of the eigenvectors is explicit. The relaxed optimization problem~\eqref{eq:thm_linear} can then be simplified.
\begin{proposition}\label{prop:gaussian}
Let $X \coloneqq (x_1,\ldots,x_n)\in\R^{p\times n}$. Let $S \coloneqq \frac1n X X\T$ be the sample covariance matrix and $\ell_1 \geq \dots \geq \ell_p$ be its eigenvalues. Then the relaxed problem~\eqref{eq:thm_linear} is equivalent to
    \begin{equation}\label{eq:prop_gaussian}
        \argmin_{\lambda\in\K_+(p)} \, \lrp{\sum_{j=1}^p \lrp{\ln \lambda_j + \frac{\ell_j}{\lambda_j}} + \frac{\alpha}{n}\sum_{s=1}^{p-1} \lrp{\delta(\lambda_s, \lambda_{s+1}) + \sum_{t=s+1}^p \delta(\lambda_s, \lambda_{t})}}.
    \end{equation}
\end{proposition}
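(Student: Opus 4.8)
The plan is to expand the Gaussian negative log-likelihood, decouple the eigenvector variable $V$ from the eigenvalue variable $\lambda$, and carry out the minimization over $V$ in closed form so that only an objective in $\lambda$ remains.

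First I would write the negative log-likelihood of a centered Gaussian with covariance $\Sigma$ evaluated on the columns of $X$. Using $\sum_{i=1}^n x_i\T \Sigma^{-1} x_i = n\,\tr{\Sigma^{-1}S}$ with $S = \frac1n XX\T$, this reads
\begin{equation*}
    -2\ln\L(\Sigma) = n p \ln(2\pi) + n \ln\det\Sigma + n\,\tr{\Sigma^{-1}S}.
\end{equation*}
Substituting $\Sigma = V\diag{\lambda}V\T$, the term $\ln\det\Sigma = \sum_{j=1}^p \ln\lambda_j$ is independent of $V$, and the only $V$-dependent contribution is $\tr{\Sigma^{-1}S} = \tr{V\diag{1/\lambda}V\T S}$. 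Since the parsimony penalty in~\eqref{eq:thm_linear} depends on $\lambda$ alone, minimizing the full objective over $(V,\lambda)$ reduces to first minimizing this trace over $V$ for each fixed $\lambda$.

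The key step is this inner minimization over $V\in\O(p)$. Since $\lambda\in\K_+(p)$ is ordered-decreasing, the eigenvalues $1/\lambda_1 \leq \dots \leq 1/\lambda_p$ of $\diag{1/\lambda}$ are ordered-increasing, whereas the eigenvalues $\ell_1 \geq \dots \geq \ell_p$ of $S$ are ordered-decreasing. By the classical trace inequality for symmetric matrices (von Neumann, or equivalently the rearrangement bound on $\tr{M S}$ over matrices $M$ isospectral to $\diag{1/\lambda}$), the extrema are the aligned and anti-aligned pairings of the two spectra; the minimum is the anti-aligned one, coupling the largest eigenvalue $1/\lambda_p$ with the smallest $\ell_p$, and so on, so that
\begin{equation*}
    \min_{V\in\O(p)} \tr{V\diag{1/\lambda}V\T S} = \sum_{j=1}^p \frac{\ell_j}{\lambda_j}.
\end{equation*}
Geometrically the minimizer aligns the eigenvectors of $\Sigma$ with those of $S$ in matching decreasing order, so the eigenbasis of $S$ is simultaneously optimal for every admissible $\lambda$.

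Finally I would assemble the pieces: after the inner minimization the likelihood term becomes $n p \ln(2\pi) + n\sum_{j=1}^p \lrp{\ln\lambda_j + \ell_j/\lambda_j}$. Adding the $V$-independent penalty, dividing the whole objective by the positive constant $n$, and discarding the additive constant $p\ln(2\pi)$---none of which affect the $\argmin$ over $\lambda$---leaves precisely~\eqref{eq:prop_gaussian}. The main obstacle is the orthogonal minimization: one must invoke the correct (lower) direction of the trace inequality and verify that pairing the increasing sequence $(1/\lambda_j)$ against the decreasing sequence $(\ell_j)$ reproduces $\sum_j \ell_j/\lambda_j$ rather than a reversed sum.
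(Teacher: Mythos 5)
Your proposal is correct and follows essentially the same route as the paper: expand the centered Gaussian negative log-likelihood, observe that the penalty is $V$-free, minimize $\tr{V\diag{1/\lambda}V\T S}$ over $V\in\O(p)$ in closed form (optimum at the ordered eigenvectors of $S$, independently of $\lambda$), then divide by $n$ and drop the constant $p\ln(2\pi)$. The only difference is that where the paper simply cites~\cite{szwagier_curse_2024} for the optimal $\hat V$, you prove that step directly via von Neumann's trace inequality, correctly identifying the anti-aligned pairing of the increasing spectrum $(1/\lambda_j)$ with the decreasing spectrum $(\ell_j)$ as yielding $\sum_{j=1}^p \ell_j/\lambda_j$ --- a self-contained justification of the same key fact.
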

\begin{proof}
    Under the Gaussian assumption, the log-likelihood can be rewritten as $-2\ln\L\lrp{V \diag{\lambda} V\T} = n (p \ln(2\pi) + \sum_{j=1}^p \ln \lambda_j + \tr{\operatorname{diag}(\lambda)^{-1} V\T S V)}$~\cite{szwagier_curse_2024}.
    An optimal $\hat V \in \O(p)$ is any sequence of ordered eigenvectors of $S$ (with decreasing eigenvalues)~\cite{szwagier_curse_2024}. This implies that $\tr{\operatorname{diag}(\lambda)^{-1} {\hat V}\T S \hat V} = \sum_{j=1}^p \lambda_j^{-1} \ell_j$. We conclude the proof by dividing by $n$ and removing the constant $p\ln(2\pi)$. \qed
\end{proof}

\begin{remark}
    In the remainder of this paper, to ensure the compactness of the constraint set and the good conditioning of the covariance matrices~\cite{bickel_regularized_2008}, the cone $\K_+(p)$ is replaced with ${\K_\varepsilon(p) \coloneqq \{\lambda\in\R^p\colon 1 / \varepsilon \geq \lambda_1 \geq \dots \geq \lambda_p \geq \varepsilon\}}$ with $~{\varepsilon \coloneqq 10^{-10}}$. It turns out to have little effect in practice, although it sometimes does in high dimensions or when some population eigenvalues are small.
\end{remark}

\noindent We propose to solve problem~\eqref{eq:prop_gaussian} on $\K_\varepsilon(p)$ with a \textit{projected gradient descent} (see e.g.~\cite{OlikierWaldspurger} and references therein) summarized in Algorithm~\ref{alg:pgd}.
\begin{algorithm}[b]
\caption{Projected Gradient Descent for Gaussian Eigengap Sparsity}\label{alg:pgd}
\begin{algorithmic}
\Require $\ell \in \K_0(p), Q\in\O(p), \alpha\in\R_{\geq 0}$ \Comment{sample eigenvalues/vectors, hyperparameter}
\State $\lambda \gets \ell$ \Comment{initialization with sample eigenvalues}
\For{$i$ = 1, 2, \dots}
    \State $g \gets \frac{\lambda - {\ell}} {\lambda^2} + \frac{\alpha}{n}\partial_\lambda \sum_{s=1}^{p-1} \lrp{\delta(\lambda_s, \lambda_{s+1}) + \sum_{t=s+1}^p \delta(\lambda_s, \lambda_{t})}$
    \Comment{gradient (elementwise)}
    \State $\lambda \gets \Pi_{\K_\varepsilon(p)}(\lambda - \beta \, g)$ \Comment{backtracking projected line search~\cite{OlikierWaldspurger}}
\EndFor
\Ensure $\hat \Sigma \coloneqq  Q \diag{\lambda} {Q}\T$ \Comment{optimal covariance matrix}
\end{algorithmic}
\end{algorithm}
A central observation is that the orthogonal projection onto the constraint set $\K_\varepsilon(p)$ corresponds to the well-known problem of \textit{isotonic regression} (cf.~\cite{salmon_isotonic_2024} for a pedagogical explanation and illustration).
This quadratic program can be \textit{exactly} solved in \textit{linear time} via the Pools Adjacent Violators Algorithm (PAVA)~\cite{best_active_1990}, which roughly boils down to sequentially investigating the eigenvalues and block-averaging the ones that violate the ordering constraints. It is followed by an eigenvalue thresholding to ensure that the eigenvalues fit the bounds in $\K_\varepsilon(p)$.

Consequently, at each iteration, Algorithm~\ref{alg:pgd} somewhat makes a tradeoff between increasing the Gaussian likelihood and decreasing the eigengaps via a gradient step, and subsequently equalizes the eigenvalues that ``intersect'' (i.e. that cross the boundary of the monotone cone) via an isotonic regression.
While isotonization has been historically employed by Stein as a post-shrinkage trick to avoid numerical issues caused by close eigenvalues (see e.g.~\cite{ledoit_quadratic_2022}), it naturally arises in our Algorithm~\ref{alg:pgd} as a projected gradient step. 
Therefore, while close eigenvalues are a \textit{curse} in the classical shrinkage literature, they are a \textit{blessing} in our framework since they get equalized and consequently bring parsimony.

The choice of eigengap function $\delta$ has a great impact on the results. While the most natural choice seems to be the \textit{absolute eigengap} $\delta(a, b) \coloneqq |a - b|$, we conjecture that it tends to make \textit{all} the eigenvalues equal (in the Gaussian case), which is not always desired.
Consequently, we choose to consider the \textit{relative eigengap} $\delta(a, b) \coloneqq (a - b) / a$ (for $a \geq b$), which is strongly justified by theory~\cite{szwagier_curse_2024} and which also turns out to yield better experimental results.

\section{Experiments}\label{sec:experiments}
We run some synthetic experiments (with 10 repetitions) to compare our Eigengap Sparsity for Covariance Parsimony (ESCP, with $\alpha_n = \ln n$) to the original Principal Subspace Analysis (PSA)~\cite{szwagier_curse_2024}, which solves exactly problem~\eqref{eq:problem} for a Gaussian density (cf. section~\ref{sec:geometry}).
We also include for completeness the Sample Covariance Matrix (SCM) and the Ledoit--Wolf (LW) estimators~\cite{ledoit_well-conditioned_2004}, although they do not optimize the same objective function. Indeed, SCM optimizes the unpenalized-likelihood problem~\eqref{eq:problem} (with $\alpha=0$) and LW looks for a linear combination of the sample covariance and identity matrices that minimizes the expected Frobenius distance to the population covariance matrix.\footnote{The code is written in Python 3.9 and run on an \textit{Intel® Core™ i7-10850H} CPU with 16 GB of RAM: \url{https://github.com/tomszwagier/eigengap-sparsity}.}

We sample $n$ points from a Gaussian distribution of covariance $\Sigma\in\Sym_+(p)$ under three settings: 
(a) $n=40, p=20, \Sigma=I_{20}$, 
(b) $n=200, p=100, \Sigma=I_{100}$ and 
(c) $n=400, p=200, \Sigma=\operatorname{diag}(10 \, I_{80}, 1 \, I_{80}, 0.1 \, I_{40})$, the last covariance matrix being denoted by $\Xi_{80, 80, 40}$ for short. 
The last setting is inspired from the baseline scenario of Ledoit and Wolf~\cite{ledoit_quadratic_2022}, which is claimed to be an interesting and difficult case. 
The first two settings have isotropic covariance matrices, therefore dimension $1$, while the last setting has dimension $3 + (200^2 - 80^2 - 80^2 - 40^2) / 2 = 12803$.
For each method, we report the penalized-likelihood $L_{\mathrm{p}}(\hat\Sigma) \coloneqq -2 \ln\L(\hat\Sigma) + \alpha_n \, \operatorname{dim}(\hat\Sigma)$, the covariance estimation error $L_\mathrm{F}(\hat\Sigma, \Sigma) \coloneqq \|{\hat\Sigma - \Sigma}\|_\mathrm{F}^2 / p$ and the number of parameters $\operatorname{dim}(\hat \Sigma)$ (cf. Table~\ref{tab:results}). We also report in Fig.~\ref{fig:results} the running times and the estimated scree plots.
\begin{table}[t]
\caption{Evaluation of the penalized-likelihood $L_{\mathrm{p}}(\hat\Sigma)$, the covariance estimation error $L_\mathrm{F}(\hat\Sigma, \Sigma)$ and the number of parameters $\operatorname{dim}(\hat \Sigma)$ for different covariance estimation methods: Sample Covariance Matrix (SCM), Ledoit--Wolf (LW)~\cite{ledoit_well-conditioned_2004}, Principal Subspace Analysis (PSA)~\cite{szwagier_curse_2024} and Eigengap Sparsity for Covariance Parsimony (ESCP) on multivariate Gaussian datasets with varying $(n, p, \Sigma)$.}\label{tab:results}
\begin{subtable}{.32\linewidth}\centering
\caption{$(40, 20, I_{20})$}
{\begin{tabular}{|c|c|c|c|}
\hline
Method &  $L_{\mathrm{p}}$ & $L_\mathrm{F}$ & $\operatorname{dim}$\\
\hline
SCM             & 33.2 & 0.544 & 210\\ 
LW              & 31.3 & 0.008 & 126\\
PSA             & 20.4 & 0.008 & 3\\
ESCP            & 20.5 & 0.004 & 3\\
\hline
\end{tabular}}
\end{subtable}
\begin{subtable}{.32\linewidth}\centering
\caption{$(200, 100, I_{100})$}
{\begin{tabular}{|c|c|c|c|}
\hline
Method &  $L_{\mathrm{p}}$ & $L_\mathrm{F}$ & $\operatorname{dim}$\\
\hline
SCM             & 202 & 0.50807 & 5050\\
LW              & 193 & 0.00030 & 3535\\
PSA             & -- & -- & --\\
ESCP            & 100 & 0.00004 & 1\\
\hline
\end{tabular}}
\end{subtable}
\begin{subtable}{.32\linewidth}\centering
\caption{$(400, 200, \Xi_{80, 80, 40})$}
{\begin{tabular}{|c|c|c|c|}
\hline
Method &  $L_{\mathrm{p}}$ & $L_\mathrm{F}$ & $\operatorname{dim}$\\
\hline
SCM             & 532 & 9.9 & 20100\\
LW              & 661 & 6.7 & 20100\\
PSA             & -- & -- & --\\
ESCP            & 455 & 2.3 & 13508\\
\hline
\end{tabular}}
\end{subtable}
\end{table}
\begin{figure}[b]
\centering
\includegraphics[width=\textwidth]{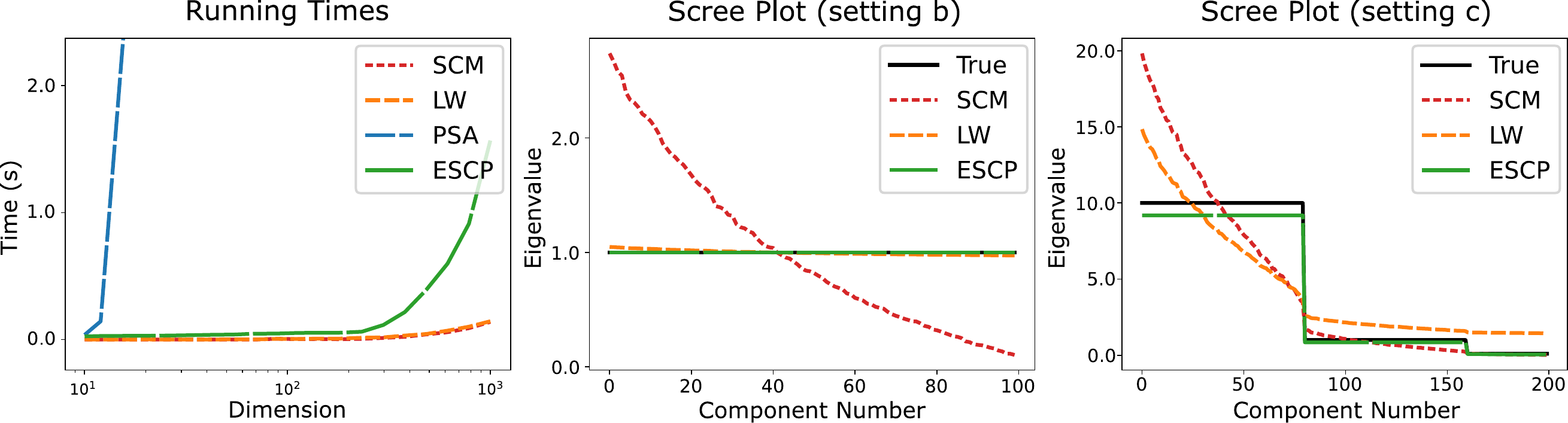}
\caption{Evaluation of the covariance estimation methods for varying $(n, p, \Sigma)$.
(Left) Running times in seconds, for increasing dimension $p$ in logarithmic scale.
(Middle, Right) Scree plots of estimated eigenvalues in the settings (b) and (c).
}
\label{fig:results}
\end{figure}

There are many observations to make. First, SCM performs poorly in terms of covariance estimation ($L_\mathrm{F}$), even for $p = 20$. This can be explained by the eigenvalue-spreading phenomenon~\cite{ledoit_well-conditioned_2004}, which we indeed observe in Fig.~\ref{fig:results} (Middle)---where the sample eigenvalues are highly spread around their true value, which is 1. 
LW is much better in terms of covariance estimation ($L_\mathrm{F}$, which is what it is designed for), although it may have a higher penalized-likelihood ($L_{\mathrm{p}}$) since it shrinks the eigenvalues (so it reduces the likelihood with respect to SCM which is the maximum likelihood estimator) while not necessarily reducing the number of parameters. 
As expected, PSA yields the best penalized-likelihood ($L_{\mathrm{p}}$) since it finds its exact minimum. 
However, its running time is extremely large (cf. Fig.~\ref{fig:results}, Left), which makes it computationally prohibitive for $p \gtrsim 20$. 
In contrast, ESCP yields similar performance as PSA in \textit{drastically} lower time, which makes it possible to run in higher dimensions.
In dimensions 100 and 200 (Table~\ref{tab:results} (b, c) and Fig.~\ref{fig:results} (Middle, Right)) ESCP has a lower penalized-likelihood ($L_{\mathrm{p}}$) than both SCM and LW, which is not much surprising since it optimizes a relaxation of $L_{\mathrm{p}}$. The good surprise is on the covariance estimation results: ESCP achieves a better estimation in average than LW although it is not designed to minimize $L_\mathrm{F}$. This can be explained by looking more closely at Fig.~\ref{fig:results}. One can indeed see that ESCP almost perfectly recovers the true covariance matrix $I_{100}$ (b) and recovers well the piecewise-constant-eigenvalue matrix $\Xi_{80, 80, 40}$ (c), which we believe to be a challenging problem that cannot be properly addressed by shrinkage methods yet.\footnote{
As we can see, the average number of parameters ($13508$) is higher than the true one ($12803$), meaning that a few sample eigenvalues were not properly equalized.
}
Putting this into perspective with the {curse of isotropy} phenomenon~\cite{szwagier_curse_2024}, we expect the underlying principal subspaces to be much more stable and interpretable than the sample eigenvectors output by shrinkage methods.
Hence, investigating covariance estimation through the prism of covariance parsimony (induced by the equalization of close eigenvalues) seems like a promising perspective.

\section{Conclusion}\label{sec:conclusion}
We formulated the problem of parsimonious covariance estimation as a penalized-likelihood optimization with an eigengap-sparsity-inducing penalty. The latter was derived from the $\ell^1$-relaxation of intrinsic dimensionality in the stratification of symmetric positive-definite matrices by the multiplicities of the eigenvalues.
We proposed a projected gradient descent algorithm for the Gaussian case, which boils down to iteratively updating the ordered-eigenvalues via a penalized-likelihood gradient-step and equalizing the intersecting eigenvalues via an isotonic regression which automatically induces parsimony. We illustrated the good performance of our method in terms of speed (with respect to classical model selection~\cite{szwagier_curse_2024}), parsimony and covariance estimation (with respect to the sample covariance matrix and shrinkage methods~\cite{ledoit_well-conditioned_2004}).

The eigengap sparsity penalty is only at an early stage of research and many perspectives arise: evaluating the performance of our method on more diverse settings (eigenvalue profiles, non-Gaussian distributions) and choices of hyperparameters (regularization strength $\alpha$ and eigengap function $\delta$); making the learning algorithm more robust and efficient in high dimensions using flag manifold optimization~\cite{ye_optimization_2022,szwagier_nested_2025}; studying the theoretical guarantees of the algorithm; considering Riemannian metrics on $\Sym_+(p)$ or $\K_+(p)$ beyond the embedded one etc. 
One could also extend our penalized-likelihood estimation framework beyond covariance matrices, considering for instance graph Laplacians and Hessians, whose parsimony is very important~\cite{lombaert_focusr_2013,lambert_limited-memory_2023}.

Our proposed approach draws some bridges with active areas of research. 
First, the eigengap sparsity can be of interest to the sparse optimization community~\cite{bach_optimization_2012}. 
Our algorithm moreover surprisingly induces sparsity via an isotonic regression instead of classical thresholding operators like in LASSO~\cite{tibshirani_regression_1996}. 
Second, the eigengap penalty results from a relaxation of covariance parsimony, which addresses the eigenvalue-spreading phenomenon studied in pioneering shrinkage methods~\cite{ledoit_well-conditioned_2004}. Interestingly, while most shrinkage methods focus on the eigenvalues, our framework for covariance estimation includes both the eigenvalues and the flag of eigenspaces. Moreover, while low-rank models like PPCA~\cite{tipping_probabilistic_1999} only group the smallest eigenvalues (considered as noise), our models may also group some of the largest eigenvalues, therefore better taking into account the eigenvalue-spreading phenomenon. 
Third, our work has interesting links with the \textit{elasso} method from David E. Tyler and Mengxi Yi~\cite{tyler_lassoing_2020} and follow-up works~\cite{basiri_fusing_2019}. Although the motivations, the penalties and the optimization algorithms are quite different from ours, these methods also result in an automatic grouping of the eigenvalues. We believe that they could be seamlessly integrated in our framework by modifying the eigengap function $\delta$ and taking advantage of automatic differentiation.

\begin{credits}
\subsubsection{\ackname} 
This work was supported by the ERC grant \#786854 G-Statistics from the European Research Council under the European Union’s Horizon 2020 research and innovation program and by the French government through the 3IA Côte d’Azur Investments ANR-23-IACL-0001 managed by the National Research Agency.

\subsubsection{\discintname} 
The authors have no competing interests to declare that are relevant to the content of this article.
\end{credits}
%
%
%
\bibliographystyle{splncs04}
\bibliography{sample}

\end{document}